\newtheorem{lemma}{Lemma}
\newtheorem{theorem}{Theorem}
\theoremstyle{remark}
\newtheorem{remark}{Remark}
\DeclareMathOperator{\CT}{\textit{CT}}
\newcommand{\N}{\mathbb N}
\newcommand{\kv}{b_{\uparrow}}
\newcommand{\kh}{b_{\rightarrow}}
\newcommand{\kd}{b_{\nearrow}}
\newcommand{\mv}{a_{\uparrow}}
\newcommand{\mh}{a_{\rightarrow}}
\renewcommand{\le}{\leqslant}\renewcommand{\ge}{\geqslant}
\newcommand\es{\Lambda}
\title{On information content in certain objects}
\author{Nikolay Vereshchagin\\
Moscow State University, HSE University, Yandex%
\thanks{This research was supported by Russian Science Foundation, grant 20-11-20203, https://rscf.ru/en/project/20-11-20203/}
}
\date{}
\begin{document}
\maketitle
\begin{abstract}
The fine approach to measure information dependence is
based on the total conditional complexity $\CT(y|x)$, which is  defined  as the minimal length of a \emph{total} 
program that outputs $y$ on the  input $x$.
It is known that the total conditional complexity can be much larger than 
than the plain conditional complexity.
Such strings  $x,y$ are defined by means of a diagonal argument and are not otherwise interesting. 
In this paper we investigate whether this happens also for some natural objects.
More specifically, we consider the following  objects:
the number of strings of complexity less than $n$
and the lex first string of length $n$ and complexity $\ge n$.
It is known that they  have negligible mutual conditional complexities.
 In this paper we prove that their  mutual total conditional complexities
 may be large.
 This is the first example of  natural 
 objects whose plain conditional complexity is much less than the total one.
\end{abstract}

\textbf{Keywords:}
Kolmogorov complexity, Algorithmic information theory, total conditional complexity.

\section{Introduction}

The main notion of Algorithmic Information Theory is that of Kolmogorov
complexity $C(x)$ of a binary string  $x$ or, more generally, of a finite object $x$. 
It is defined as the minimal length of a program that outputs $x$ on the empty input $\es$,
assuming that our programming language is optimal in a sense.
Kolmogorov complexity of $x$ measures the amount of information present
in  $x$, regardless of whether that information is useful or not. 
For example, the string  $0^n$ consisting of $n$ zeros 
has at most $\log_2 n$ bits of information, while a random string of length  $n$ almost surely has 
about $n$ bits of information. 

Informally, the string $0^n$ contains the same information as  the number $n$.
How can we define formally what means that strings $x$ and $y$ contain  the same 
information? For this purpose we can use conditional Kolmogorov
complexity $C(x|y)$. It is defined  as the minimal length of a program that outputs $x$ on the  input $y$.
One can define that 
 $x$ and $y$ contain  the same 
information if both conditional complexities  $C(x|y),C(y|x)$ are negligible.
This definition agrees well with the fact that both conditional complexities  
 $C(0^n|n),C(n|0^n)$ are bounded by a constant.
 
However there exists a more fine approach to the notion of information.
It is based on the notion of the total conditional complexity $\CT(x|y)$, which 
is  defined  as the minimal length of a \emph{total} 
program that outputs $x$ on the  input $y$.
This approach was developed in~\cite{shen1,ver}.
It is known that the total conditional complexity can be much larger than 
than the plain
one, that is, there are strings  $x,y$ with $\CT(x|y)\gg C(x|y)$.  This fact was first observed in~\cite{shen2}. The strings  $x,y$ are defined by means of a diagonal argument and are not otherwise interesting. 

In this paper, we are interested what are mutual total conditional complexities  for some particular strings
that have natural definitions.
More specifically, for some  pairs of  strings  $x,y$ such that $C(x|y)\approx0$
we wonder whether we also have $\CT(x|y)\approx0$?
To this end, we have chosen 10 natural objects.
Namely, fix an optimal programming language $U$.
Consider then the following objects:
  \begin{enumerate}
\item
$B_n=\max\{m\in\N\mid C(m)<n\}$ --- the maximal natural number of Kolmogorov complexity less than $n$.
\item 
$BB_n=\max\{t_U(p,\Lambda)\mid |p|<n, U(p,\Lambda)\text{ is defined}\}$ --- the maximal
running time of a halting program of length less than $n$ on the empty input.
\item $T_n=\arg\max\{t_U(p,\Lambda)\mid  |p|<n, U(p,\Lambda)\text{ is defined}\}$ --- the most long-running halting program of
length less than $n$.
\item $LC_n=$ the list of all strings of Kolmogorov complexity less than $n$ together with their complexities. 
\item $L_n=$ the list of all strings of Kolmogorov complexity less than $n$. 
\item $\tilde L_n=$ the list of all programs of length less than $n$ (with respect to  $U$) that halt on the empty input.
 \item $G_n=$ the graph of the function $C(\cdot)$ on strings of length  $n$.
\item $H_n=$ the lex first string of length $n$ and complexity $\ge n$.
\item $N_n=|L_n|$ (the number of strings of complexity less than $n$).
\item $\tilde N_n=|\tilde L_n|$ (the number of programs of length less than $n$, with respect to  $U$, 
that halt on the empty input).
\end{enumerate}

When we talk about a list, we mean that strings of the list are
arranged in the alphabetical order. Item 3 
refers to the first such program, if there is more than one. 
Although not explicitly stated in the definition, we will assume that the number $n$ is the second component of each of these objects, 
so in fact each of them is a pair  (the object itself, the number $n$)\footnote{From 
most of these objects it is easy to find the number $n$, but from some it is not so easy. 
This issue will not be investigated in this paper.}.

Strictly speaking, all objects under consideration depend on the choice of the optimal programming language $U$, so in what follows we will use the 
notation $B^U_n,BB^U_n,\dots,\tilde N^{U}_n$.

These objects were discussed in the book~\cite{SUV}, except for the list $L_n$.%
\footnote{The book uses slightly different notation. Namely, the seventh object is denoted by $T_n$, 
and objects 3, 4, 6, 8, 9, 10 have no names at all.} 
In this book, it is proved that the complexity of all these objects is equal to $n$ up to a constant term. 
Moreover, it is proved there that, firstly, each of these objects for different 
$U$'s contains the same information, 
and secondly, different objects contain the same information. Both 
statements assume the rough approach to the definition of information, that is, 
assume that $x$ and $y$ have the same information if  both conditional complexities  $C(x|y),C(y|x)$ are negligible.

More precisely, the following holds 
\begin{theorem}[\cite{SUV}]\label{th1} 
For any objects $X,Y$ from the above list and any optimal programming languages $U,V$,
there are constants $c,d$ for which $C(X^U_n|Y^V_{n+c})\le d$ for all $n$. 
\end{theorem}
Informally speaking, the information contained in the object $X^U_n$
does not depend on the optimal programming language $U$, 
nor on the choice of the object $X$ from the above list. 
Note that the transition
from $n$ to $n+c$ in this theorem is natural, because Kolmogorov complexity can change by a constant when 
the programming language changes.

The proof of this theorem can be found in \cite[Theorem 15 on p. 25]{SUV}.
It has  the following minor point:
let $p$ be a program of length $d$ witnessing the inequality $C(X^U_n|Y^V_{n+c})\le d$. Then
the algorithm $U$ on input $(p,Y^V_{n+c})$ can run for a  very long time:
its running time may not be bounded by any total computable function of $Y^V_{n+c}$.
In other words, the proof does not establish that the total
complexity $\CT(X^U_n|Y^V_{n+c})$ is small.

A natural question arises whether it is possible to get rid of this disadvantage, that is,
are the quantities $\CT(X^U_n|Y^V_{n+c})$ also bounded by a constant?
Informally speaking, does the information in $X^U_n$ depend on the optimal programming language $U$ and
on selecting an object $X$ from the list, assuming the fine approach
to the definition of information?
We will split this question into the following two:
\begin{itemize}
\item For a given object $X$ from the above list and for any
  optimal programming languages $U,V$:
  is it true that $$\CT(X^U_n|X^V_{n+c})\le d$$ for some $c,d$ and all $n$
(does the information in $X^U_n$ depend on the choice of the optimal programming language)?
\item For given objects $X\ne Y$ from the list and for any
optimal programming language $U$: is it true that $\CT(X^U_n|Y^U_{n+c})\le d$ for some $c,d$ and all $n$
(does information depend on the object for a fixed optimal programming language)?
\end{itemize}
In total, we have
10 questions of the first type and 90 questions of the second type.
First, let us list all the questions where the answer is known.
Those answers are positive and are either obvious, or are essentially
obtained in~\cite{SUV} in the proof of the Theorem~\ref{th1}. That is,
the programs constructed in~\cite{SUV} to witness the inequalities $\CT(X^U_n|Y^V_{n+c})\le d$
are total.
To this end, we divide the objects into \emph{large}, 1, 2 and 4, and \emph{small} --- all other objects.

 \emph{The known answers to the first question.}
 For all large objects, the answer to the first question is positive:
 for any large $X$ and for any $U,V$,
 $\CT(X^U_n|X^V_{n+c})\le d$ for some $c,d$ and all $n$.
 We conjecture that for all small objects $X$ the answer to the first question is negative.

 \emph{The answers to the second question.}
 All large objects are equivalent to each other: 
 for any pair of large objects, the answer to the second question is positive.
 In addition, the answer is positive,
 if $X$ is a small object and $Y$ is a large one.
 Finally, the answer to the second question is positive (for obvious reasons)
 also for the following pairs (see Fig.~\ref{pic7}):
 \begin{itemize}
\item $X=N$, $Y=L$,
\item $X=\tilde N$, $Y=\tilde L$,
\item $X= H$, $Y=G$,
\item $X=H$, $Y=L$.
\end{itemize}
\begin{figure}[ht]
\begin{center}
\includegraphics{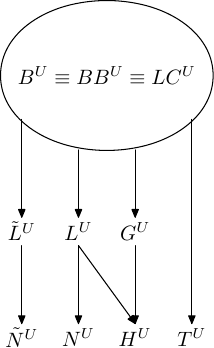}
\end{center}
\caption{Arrows show pairs of objects $(X,Y)$ for which
the answer to the second question is positive, that is, $\CT(X^U_n|Y^U_{n+c})\le d$ for some constants $c,d$.
The arrow leads from $Y$ to $X$. All three objects at the top are connected by bi-directional arrows. }\label{pic7}
\end{figure}
We conjecture that for the remaining pairs of objects $X,Y$ the answer to the second question is
negative.

In this paper, we have chosen 2 objects from the specified list,
$N_n$ and $H_n$, and found that for them the answers to all four questions
are negative (Theorems~\ref{th2}, \ref{th3}, \ref{th5} and \ref{th4}): all four total complexities
 $\CT(N^U_n|N^V_{n+c})$, $\CT(H^U_n|H^V_{n+c})$, $\CT(N^U_n|H^U_{ n+c})$ and $\CT(H^U_n|N^U_{n+c})$ can
grow
linearly as $n$ tends to infinity, for any $c\in\N$.
It remains an open question whether the last two statements (about the unlimited growth
of $\CT(N^U_n|H^U_{n+c})$ and $\CT(H^U_n|N^U_{n+c})$) hold
for \emph{every} optimal programming language $U$?

In the next section, we present the main definitions.
In Section~\ref{s3} we give the formulations of the results and their proofs.
The proofs use a game technique: for each
theorem, we define some two-person game with complete information and
we construct a computable winning strategy for one of the players. In Appendix~\ref{s4}
we make a complete analysis of a game of this type: we determine for which values of
parameters the first player has a winning strategy, and for which ones the second player does.

\section{Preliminaries}\label{s2}

We write \emph{string} to denote a  finite  binary  string.  Other
finite objects, such  as natural numbers or pairs of  strings,  may be encoded  into
strings  in  natural ways. The set  of  all strings is denoted  by
$\{0,1\}^*$ and the length of a string $x$ is denoted by $|x|$. The empty string is denoted by $\es$.

A \emph{programming language} is an algorithm 
$F$ from $\{0,1\}^*\times\{0,1\}^*$ to $\{0,1\}^*$.
The first argument of $F$ is called a program,
the second argument is called the input, and
$F(p,x)$ is called the output of program $p$ on input $x$. 
Let $t_F(p,x)$ denote the running time of $F$ on inputs $p,x$.
Let 
 $$
 C_F(y|x)=\min\{|p|\mid F(p,x)=y\}
 $$ 
 (conditional complexity
of $y$ relative to $x$ with respect to $F$).
A programming language $U$ is called \emph{optimal}
if for any other
programming language $F$ there exists a 
constant $d_F$
such that $C_U(y|x)\le C_F(y|x)+d_F$ for  all $x,y$.
By Solomonoff -- Kolmogorov theorem (see e.g. \cite{lv,SUV})
optimal programming languages
exist. 

Fix an optimal programming language $U$. We then define
\begin{itemize}
\item $C(y|x)=C_U(y|x)$ (conditional Kolmogorov
complexity of $y$ relative to $x$),
\item $C(x)=C(x|\es)$ (Kolmogorov complexity of $x$),
\end{itemize}

If $U(p,x)=y$, we say that $p$ is \emph{a  program for $y$ relative to $x$}.
If $U(p,\es)=x$, we say that \emph{$p$ is a program for $x$}.

To define total conditional complexity, we fix a programming language $T$
with the following property:
for any programming language $F$ there is a constant $c_F$
with
$$
\CT_T(y|x)\le \CT_F(y|x)+c_F
$$
for all strings $x,y$.
Here
$$
\CT_F(y|x)=\min\{|p|\mid F(p,x)=y, F(p,x')\text{ is defined for all strings }x'\}.
$$ 
The existence of such a programming language $T$ is proved exactly as Kolmogorov --- Solomonoff theorem.
We then define
$$\CT(y|x)=\CT_T(y|x)$$ 
(total conditional Kolmogorov
complexity of $y$ relative to $x$),


We use the following well known facts
\begin{itemize}
\item \emph{The upper graph} $\{(x,y,i)\mid C_F(y|x)< i\}$ of conditional Kolmogorov complexity
relative to any programming language $F$ 
is computably enumerable and for all $i,x$ there are less than $2^i$ different $y$'s
such that $C_U(y|x)< i$.
\item In particular, for any $l$ there is  a string of length $l$ with $C(x)\ge l$;
\item
Conversely, for any computably enumerable set $A$ such that  for all $i,x$ there are less than $2^i$ different $y$'s with $(x,y,i)\in A$ there is a constant $c$ such that
$C(y|x)\le \min\{i\mid (x,y,i)\in A\}+c$ for all $x,y$.
\end{itemize}

\section{Results}\label{s3}

\begin{theorem}\label{th2}
There are optimal programming languages $U,V$ such that
for all natural $c$ there are infinitely many $n$ with
$$\CT(N^U_n|N^V_{n+c})\ge s,$$ where $s=(n-5)/3$.
\end{theorem}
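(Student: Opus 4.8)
The plan is to construct two optimal programming languages $U$ and $V$ and exploit the freedom we have in designing them so that, for infinitely many $n$, any total program converting $N^V_{n+c}$ into $N^U_n$ must be long. The key point is that $N^U_n$ is just the number of strings of $U$-complexity less than $n$; since both languages are optimal, these counts differ from the ``true'' count by only bounded shifts in the threshold, but we retain enough control to make the dependence between the two counts computationally awkward. I would set up a two-person game (as the author announces) in which one player, representing the adversary who must produce a short total program, commits to total functions, while the other player, representing our construction, responds by placing complexity constraints that realize particular values of $N^U_n$ and $N^V_{n+c}$. A computable winning strategy for the constructor then yields the desired pair $U,V$.

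\medskip

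\noindent First I would make precise what a short total program can do. Suppose toward contradiction that for some fixed $c$ and some large $n$ we had $\CT(N^U_n\mid N^V_{n+c}) < s$ with $s=(n-5)/3$. Then there is a \emph{total} program $p$ of length less than $s$ with $U(p, N^V_{n+c}) = N^U_n$. Totality is the crucial hint: $p$ computes an output on \emph{every} input, in particular on every candidate value that $N^V_{n+c}$ might take. Since $N^V_{n+c}$ lies in a range of size roughly $2^{n+c}$ but is itself an object of complexity about $n$, there are only polynomially (in fact boundedly, up to the threshold shift) many plausible values; the totality lets us enumerate the program's responses to all of them in advance, turning an individual inequality into a uniform combinatorial commitment. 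The plan is to encode, for each short candidate program $p$, the finite table of its responses and use a counting argument to show that these tables cannot simultaneously track the correct value of $N^U_n$ across the infinitely many $n$ we get to control.

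\medskip

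\noindent The heart of the argument is the diagonalization against all short total programs, realized through the game. Because there are fewer than $2^s$ programs of length below $s$, and $s\approx n/3$, while the constructor can steer the value $N^U_n$ by deciding which strings receive which complexities (subject to the Kraft-type constraint that fewer than $2^i$ strings have complexity below $i$, as recorded in the preliminaries), the constructor has a budget that outstrips the adversary. Concretely, I would have the constructor reserve the right to increase $N^U_n$ by adding a short, cheap description of some new string of complexity just below $n$, while simultaneously keeping $N^V_{n+c}$ fixed, or vice versa; each such move invalidates all short total programs whose tabulated output disagrees. The constant $5$ and the factor $3$ in $s=(n-5)/3$ should emerge from bookkeeping: the three ``directions'' of freedom (describing strings in $U$, in $V$, and the shift $c$) and a small additive overhead for encoding pairs and the threshold.

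\medskip

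\noindent The main obstacle I anticipate is maintaining \emph{optimality} of both $U$ and $V$ throughout the construction. We are not free to define $U,V$ arbitrarily: they must dominate every other programming language up to an additive constant, so the complexity-lowering moves the constructor makes must be globally consistent and must not destroy the Solomonoff--Kolmogorov domination property. The standard device is to build $U,V$ as optimal languages that simulate a fixed universal machine but are allowed finitely many ``patches'' at each stage, ensuring the patches only ever \emph{decrease} complexities by bounded amounts and only finitely often at each length, so optimality survives. Reconciling this global constraint with the local diagonalization demands—guaranteeing that each patch genuinely changes the targeted value $N^U_n$ or $N^V_{n+c}$ in the intended direction without uncontrolled side effects on neighboring thresholds—is where the real work lies, and it is precisely this tension that the game formalism is designed to manage: the game's win condition should bundle together ``optimality is preserved'' and ``every short total program is eventually defeated,'' and the computable winning strategy then delivers both at once.
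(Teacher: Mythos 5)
Your scaffolding matches the paper's: pad a fixed optimal language $W$ to obtain $U,V$ (the paper sets $U(0^ep,x)=V(0^ep,x)=W(p,x)$, which is exactly your ``patches preserve optimality'' device), regard each candidate program $p$ of length less than $s$ as committing to a table of responses, and diagonalize via a game with a computable winning strategy. But two of your central claims are false, and together they conceal the actual combinatorial difficulty. First, $N^V_{n+c}$ does not have ``only polynomially (in fact boundedly) many plausible values'': during the construction it can take any of roughly $2^{n+c}$ values, and each of the fewer than $2^s$ short programs forbids one cell $(x,T(p,x))$ for \emph{every} possible value $x$. In the paper's game each declaration colors red up to $r=2^{n+c}$ cells and there are up to $2^s$ declarations, so the red set can have size about $2^{s+n+c}$, vastly exceeding Bob's budget of about $2^{n-1}$ increments of $N^U_n$; a naive count of programs against moves therefore does not close. (Also, totality of $p$ is undetectable; the paper instead fools every $p$ that is defined on all inputs of length $n+c$, which is enumerable and suffices.) Second, a move does not ``invalidate'' a program once and for all: the pair $(N^V_{n+c},N^U_n)$ also moves involuntarily --- through the enumeration of $W$ (fewer than $2^{n-e}$ times) and through spillover from increments at levels $i<n$, which is precisely why the paper needs the sparse set $S$ whose pairs are $d$-separated in both coordinates, bounding spillover by $2^{n-d}$. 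A table you have ``defeated'' can match again after such an involuntary move; only the limit position decides the game.

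Consequently the heart of the paper's proof is a nontrivial evasion strategy (Lemma~\ref{l1}) that your proposal leaves entirely unspecified: Bob confines the token to a staircase $A_j$ with horizontal steps of length $f=2^c$, chosen among $\Delta$ disjoint staircases so as to contain at most $rp/\Delta$ red cells, and then dodges red cells along it; optimizing $\Delta=\sqrt{rp/f}$ yields the sufficient conditions $\kh/f-\mv\ge\sqrt{rp^3/f}$ and $\kv-\mv-\mh/f\ge 2\sqrt{rp^3/f}$. The factor $3$ in $s=(n-5)/3$ comes from the $p^3$ under this square root (evasive right-shifts cost about $rp^2/\Delta$ and staircase re-selections cost $\Delta p$), not from your guessed ``three directions of freedom'' ($U$, $V$, and the shift $c$). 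Your proposal stops exactly where this work begins, so as it stands it has a genuine gap: without the staircase lemma (or some substitute coping with a red set of size $2^{s+n+c}$ and with involuntary moves in both coordinates), the diagonalization cannot be carried out.
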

\begin{proof}
First, we decide for which $n$ we will prove the inequality
for the given $c$. To do this, we choose a sufficiently large constant $d$.
Then we fix any enumerable family $S$ of pairs of natural numbers
with the following properties:\\ (1) all the components of different pairs from $S$
differ by at least $d$,\\
(2) for each natural $c$, the family $S$ contains infinitely many pairs of the form $(n,n+c)$.\\
Such a family $S$ can be easily constructed by a greedy algorithm. More specifically, we arrange all
natural $c$ in a sequence in such a way that each $c$ appears  infinitely many times
in the sequence. For example, like this:
$$
001012012301234
\dots
$$
Then we consider all terms of the sequence in turn and 
for each $c$ we include in $S$ any pair    
$(n,n+c)$ that is at distance at least $d$ (in both coordinates) from every pair included in $S$ so far.

Let us fix any optimal  programming language $W$ and define both $U$ and $V$ on words that have   $e$ 
leading zeros, as $W$, i.e. $U(00\dots0p,x)=V( 00\dots0p,x)=W(p,x)$. 
Here $e$ is a sufficiently large constant to be chosen later.
Thus, we will ensure optimality of both $U$ and $V$. On the other hand, we are free
to define $U$ and $V$ on the remaining words. 
The fraction of such words is the closer to 1, the larger $e$ we choose. The other restriction is that 
both functions must be computable. 

We will ensure computability of $U,V$ by constructing  algorithms that enumerate their graphs.
To this end, let us start  enumerations of the graph of $W$,  of the graph of the programming language $T$ 
from the definition of total complexity, and of the set $S$. 
When a new pair $(p,x)$ appears in the graph of $W$, we include the pair
$(0^ep,x)$ in the graphs of $U$ and $V$. This may result in 
decreasing $C_U(x)$ and $C_V(x)$  
(initially both values are infinite). This means some numbers 
$N^U_n,N^V_{n}$ can increase by 1 time to time 
(initially they are equal to zero). 
More precisely, if the complexity of a word $x$ with respect to $U$ was equal to $i$ and at a certain step of enumeration
it becomes equal to $j<i$, then all numbers $N^U_n$ for $n=j+1,\dots ,i-1$ increase
by 1. The same happens with the numbers $N^V_{n}$. Since the number of programs of length less than $n$ starting with $e$ 
zeros is less than $2^{n-e}$, both numbers $N^U_n,N^V_{n}$ can increase less than $2^{n-e}$ times 
due to the enumeration of the graph of $W$.

We have to ensure that for any pair $(n,n+c)$ from $S$ the inequality 
$$
\CT(N^U_n|N^V_{n+c})\ge s
$$ 
holds. 
In other words,  there is no total program $p$ of length less than $s$
with $T(p,N^V_{n+c})=N^U_n$.

Let some pair $(n,n+c)$ from $S$ be given.  Note that there is 
no moment in the enumeration of  the graph of $T$ when we can be sure that  a
given program $p$ is total. However 
we do not need this, as we can even fool all programs $p$ for which the function $x\mapsto T(p,x)$ 
is defined on all words $x$ of length $n+c$. When such $p$ of length less than $s$ is discovered, 
we will say that all pairs of the form $(x,T(p,x))$, $|x|=n+c$, become red 
(initially we assume that all pairs are white). 
For each pair $(n,n+c)$ from $S$, our goal is to extend the functions $U$ and $V$ so that the pair $(N^V_{n+c},N^U_n)$ 
is white after the appearance of the last red pair for this $n$. To do this, 
we increase $N^U_n$ or $N^V_{n+c}$ by choosing a new word $y$ 
whose complexity is still large, and a new string $p$ of length $n-1$ or $n+c-1$, respectively, that 
does not start with $e$ zeros. Then we  set $U(p,\es)=y$ or $V(p,\es)=y$. We can do this $2^{n-1}(1-2^{-e})$ times for $N^U_n$ and $2^{n+c-1}(1-2^{-e})$ times for $N^V_{n+c}$.

When we increase $N^U_i$ ($N^V_{i}$), we also automatically increase the numbers $N^U_n$ ($N^V_{n}$) for all $n>i$ as well. 
We will imagine that these ``unwanted''
increments of the numbers $N^U_n$ and $N^V_{n}$ are made by an imaginary adversary. 
If the constant $d$ in the property (1) of the set $S$ is sufficiently large, then the number of unwanted 
increments of the number $N^U_n$ is much less than the number of useful increments. Indeed, it does not exceed 
$$
\sum_{i<n-d}2^i<2^{n-d}\ll 2^{n-1}. 
$$
The same is true for increments of $N^V_n$.

Increases of $N^U_n$ and $N^V_{n}$ caused by the
enumeration of the graph of $W$ will also be referred to as unwanted ones and assumed to be made by the 
adversary. There are less than $2^{n-e}$ of them, which is also small compared to $2^{n-1}$,
provided $e$ is large.

Thus, we are essentially playing   in parallel infinite number of games $G_{n,c}$, where  
$(n,n+c)\in S$, with an imaginary opponent. 
In the game $G_{n,c}$ we can increase $2^{n-1}(1-2^{-e})$ times the number $N^U_n$, and $2^{n+c-1}( 1-2^{-e})$ times the number $N^V_{n+c}$. The opponent can increase these numbers $2^{n-d}+2^{n-e}$ and $2^{n+c-d}+2^{n+c-e}$ times\footnote{and even $2^{n-d}+2^{n +c-e}$ times}, respectively. In addition, the opponent can sometimes declare some pairs of natural numbers as undesirable by coloring them red. She can make such a declaration $2^{s}$ times, indicating for each $i<2^{n+c}$ some $j<2^n$ for which the pair $(i,j)$ turns red. Our goal is to guarantee that after the opponent makes her last move, the pair $(N^U_n,N^V_{n+c})$ is white.

We will construct a winning strategy in each game $G_{n,c}$. To do this, we give a more general description of the game, 
independent of specific values of parameters.

\emph{The description of the game.} 
The game is determined
by the natural parameters 
$$
\mv,\mh,\kv,\kh,r,p
$$ 
and is played by two players, Alice and Bob, who move in turn, starting with Alice. 
The ``game board'' is the set $\N\times\N$, whose elements are  called \emph{cells}. 
At each moment of the play, there is a token on some cell, which at the beginning is on the cell $(0,0)$. 
On each her move, Alice either passes or moves the token to the right or up, that is, from the current cell $(i,j)$ 
to one of the cells $(i+1,j)$, $(i,j+1)$. On each his turn, Bob either passes or makes one or more moves of the same type. 
Alice can move the token up no more than $\mv$ times and to the right no more than $\mh$ times, 
and Bob can move the token no more than $\kv$ and $\kh$ times, respectively.

In addition, on any of her moves, 
Alice can declare any\footnote{We could restrict Alice by  letting her declare red only cells from the graph of some function. 
However, this restriction does not help to  construct a winning strategy.} 
at most $r$ cells \emph{red} (initially all cells are white). 
She can make such a declaration less than $p$ times. 
The game continues countably many moves, and at the end of the game the token 
will be located at some limit position 
(since each player can move it only a finite number of times). 
Alice is declared the winner if this cell is red, otherwise Bob. 
The following lemma provides a sufficient condition for the existence of Bob's winning strategy.

\begin{lemma}\label{l1}
In this game, Bob has a computable winning strategy if
$$
\kh/f-\mv\ge \sqrt{rp^3/f},\quad
\kv-\mv-\mh/f\ge 2\sqrt{rp^3/f}
$$
for some positive integer $f$. 
\end{lemma}
Recall that we are interested in the following regime: 
$p=2^s=2^{(n-5)/3}$ and $r=2^{n+c}$. 
Let $f=2^c$. With such parameters, the right-hand sides of the inequalities in the condition of the lemma are respectively $2^{n-2.5}$ and $2^{n-1.5}$, 
which is a constant times less than $2^{n-1}$. 
On the other hand, $\kv\approx 2^{n-1}$ and $\kh\approx 2^{n+c-1}$, thus $\kh/f\approx 2^{n-1}$.
The larger are constants $e,d$ the more precise these equalities are.
Besides   $\mv$ and $\mh/f$ are negigible compared to $2^{n-1}$ provided  $d,e$ are large.
Thus by choosing sufficiently large constants $d,e$, we can ensure that the left-hand sides of the inequalities are arbitrarily close to $2^{n-1}$, 
and hence ensure that both inequalities from the condition of the lemma are true. 
 So we can win all $G_{n,c}$ games. To complete the proof of the theorem, it remains to prove the Lemma~\ref{l1}.
\end{proof}

\begin{proof}[Proof of Lemma~\ref{l1}]
Before Alice has made the first announcement about red cells, we, playing as Bob, pass. 
Then after each such declaration, we do the following. Let the token currently be on the cell $(x,y)$. For each $j=0,1,\dots,\Delta-1$, where $\Delta$ is some parameter that will be  chosen later, consider the following set of cells
$$
A_j=\{(x+i,y+\lfloor i/f\rfloor+j)\mid i\ge0\}.
$$
These sets will henceforth be called \emph{stair cases}, since they consist of horizontal steps of length $f$. 
Among these sets, we choose the one with the least number of red cells. 
These sets are pairwise disjoint, so the chosen set contains at most $rp/\Delta$ red cells.
Then, moving up, we ensure that the token falls into the chosen stair case. 
Note that for this it suffices to make less than $\Delta$ moves. Indeed, now the token is in $A_0$. 
When moving up by $j$ steps, we get into the set $A_j$.

Then, until Alice makes the next announcement about red cells, 
we try to keep the token on a white cell of the chosen stair case. To this end we perform
the following actions.

\emph{Returning the token back to the selected stair case:}
if Alice has moved the token up, on the next move we shift it no more than $f$ times 
to the right in order to return the token to the chosen stair case. 
If Alice moves the token  to the right and before that the token was at the end of a step 
(and therefore left the stair case), we move it up on the next move. 

\emph{Avoiding red cells:}
if the token is on a red cell, then we shift it to the right, and if after that the token leaves the selected stair case, 
then we shift it up. If after these actions the token  is still on a red cell, then we repeat this until it is on a
white cell.

Our strategy has been described. It is obvious that it wins, because after each of our moves the token is on a white cell, 
which means that in the limit it will end up on a white cell. 
It remains to prove  that the strategy does not exceed the limit of shifts. 

We make four kinds of shifts:
\begin{itemize}
\item 
Shifts to the right in order to move the token away from a red cell. 
The number of
such shifts is less than the number of red cells in the chosen stair case multiplied by the number of red cell declarations, 
i.e., less than $(rp/\Delta)\cdot p$.
\item Shifts to the right to place the token in the chosen stair case after Alice has moved it up. 
There are less than $\mv\cdot f$ such shifts. 
\item Shifts up to place the token in the selected stair case after Alice's next announcement of red cells. 
There are less than $\Delta\cdot p$ such shifts. 
\item Shifts up to place the token in the selected stair case after it has moved to the right (by Alice or us) and left a step. 
Denote by $\kh'$ the actual number of shifts to the right that we have made. Then the total number of shifts to the right is at most $\kh'+\mh$, 
but only each $ f$th of them takes the token out of the stair case. 
Therefore, there are less than $(\kh'+\mh)/f$ such shifts. 
As we saw above, $\kh'<rp^2/\Delta + \mv f$. 
Therefore, there are less than
$$
(rp^2/\Delta +   \mv f  +\mh)/f=rp^2/\Delta f  +   \mv+ \mh/f
$$
such shifts.
\end{itemize}
So, in order not to exceed the limits, we need inequalities
\begin{align*}
\kh &\ge rp^2/\Delta +   \mv f,\\
\kv &\ge rp^2/\Delta f  +\Delta p +   \mv+ \mh/f.
\end{align*}
Now we select the $\Delta$ parameter by setting $\Delta=\sqrt{rp/f}$. With this choice, our inequalities turn into the inequalities
\begin{align*}
\kh &\ge \sqrt{rp^3f} +   \mv f,\\
\kv &\ge2\sqrt{rp^3/f} +   \mv  +\mh/f
\end{align*}
from the condition of the lemma.
\end{proof}

\begin{theorem}\label{th3}
There are optimal programming languages $U,V$ such that for all positive integers $c$ 
there are infinitely many $n$ with 
$$
\CT(H^U_n|H^V_{n+ c})\ge (n-5)/3.
$$
\end{theorem}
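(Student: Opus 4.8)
The plan is to reuse verbatim the game-theoretic machinery of Theorem~\ref{th2}: I would again fix an optimal $W$ and a large constant $e$, set $U(0^ep,x)=V(0^ep,x)=W(p,x)$ to force optimality of $U,V$ while keeping control over all programs not starting with $e$ zeros, and take the same enumerable family $S$ of well-separated pairs $(n,n+c)$. For each such pair I would aim to defeat every program $q$ of length less than $s=(n-5)/3$ whose map $x\mapsto T(q,x)$ is total on strings of length $n+c$: upon its appearance I declare the pairs $(x,T(q,x))$, $|x|=n+c$, red, and try to arrange that the limiting pair $(H^V_{n+c},H^U_n)$ (identifying a length-$m$ string with its rank below $2^m$) stays white. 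Thus each $(n,n+c)\in S$ gives an instance $G_{n,c}$ of the game of Lemma~\ref{l1}, with $r=2^{n+c}$ red cells per declaration and $p=2^s$ declarations.

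The first step is to explain how to steer the two objects. Recall $H^U_n$ is the lexicographically least length-$n$ string of complexity $\ge n$; as we enumerate, its current value only moves right and converges to the true object. To push it up by one I would take the current least ``high'' string and give it a fresh length-$(n-1)$ program not starting with $e$ zeros, turning it ``low'' and advancing $H^U_n$ to the next high string; symmetrically I move $H^V_{n+c}$ right with fresh length-$(n+c-1)$ programs for $V$. These deliberate low-makings are Bob's moves, giving $\kv=2^{n-1}(1-2^{-e})$ and $\kh=2^{n+c-1}(1-2^{-e})$, exactly as in Theorem~\ref{th2}.

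The key structural simplification --- and the step I would treat most carefully --- is the accounting of unwanted increments. Because a length-$n$ string affects only $H^U_n$ and no $H^U_{n'}$ with $n'\ne n$, the games no longer interfere through our own useful moves, so the sole source of unwanted increments is the enumeration of $W$, which can lower at most $2^{n-e}$ strings of length $n$. Attributing these to Alice yields $\mv<2^{n-e}$ and $\mh<2^{n+c-e}$, even smaller than in Theorem~\ref{th2} (the $2^{n-d}$ term disappears). Here I would verify that, since we always lower the current least high string, each string we make low ends strictly below the final $H^U_n$ and so contributes exactly one useful up-move, while each $W$-lowered string below the final position contributes at most one adversarial up-move; monotonicity of $H^U_n$ makes this attribution consistent and keeps Bob's strategy computable.

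Finally I would close exactly as in Theorem~\ref{th2}. With $p=2^{(n-5)/3}$, $r=2^{n+c}$ and $f=2^c$ the right-hand sides of Lemma~\ref{l1} are $2^{n-2.5}$ and $2^{n-1.5}$, whereas $\kh/f\approx\kv\approx 2^{n-1}$ and $\mv,\mh/f$ are negligible for large $e$; hence both inequalities hold once $e$ is large, Bob wins every $G_{n,c}$, and therefore $\CT(H^U_n|H^V_{n+c})\ge(n-5)/3$ for the infinitely many $n$ with $(n,n+c)\in S$ and every $c\ge1$. I expect the main obstacle to lie not in the game analysis (which is identical) but in this reduction: making precise that the monotone, computably approximable object $H^U_n$ can be driven through the abstract game just as $N^U_n$ was, with the enumeration of $W$ in the role of the adversary.
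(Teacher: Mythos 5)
Your proposal is correct and follows essentially the same route as the paper: reduce each pair $(n,n+c)\in S$ to the game of Lemma~\ref{l1}, with Bob's moves realized by simplifying the current lex-first non-simple word via fresh programs of length $n-1$ (resp.\ $n+c-1$) avoiding the $0^e$ prefix, and with the $W$-enumeration's out-of-order simplifications attributed to Alice via exactly the delayed-move bookkeeping the paper uses (``mentally remember'' a premature simplification and count it only when that word becomes lex-first). Your observation that the $2^{n-d}$ cross-game interference term vanishes for $H$ (since simplifying a length-$m$ word touches only $H_m$) is a correct minor sharpening of the paper's account, which simply reuses the Theorem~\ref{th2} budgets unchanged.
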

\begin{proof}
Surprisingly, the proof of this theorem is almost the same as the proof 
of the previous one. Now the moves of both players in the game $G_{n,c}$ consist in declaring some words of length $n$ or $n+c$ \emph{simple}, 
that is, declare that their complexities are less than their lengths. 
To do this, playing as Bob, we will provide  descriptions of length $n-1$ and $n+c-1$, respectively, for such words.

We will declare as simple only those words of a given length 
which at the current  moment are lex first non-simple ones. 
Alice is not required to act in the same way, that is, to declare as simple only the first non-simple words. 
However, without loss of generality, we can assume that she also keeps this rule. 
Indeed, if Alice declares simple a word $x$ of length $n$ that is not currently the smallest non-simple word, 
this declaration does not chnage $H_n$.
We can mentally remember this move and assume that Alice has passed. 
We will remember this move of hers when $x$ turns out to be the smallest non-simple word. 
Such a rearrangement of moves, obviously, will not change anything. 
If we assume that Alice either passes or declares the smallest of the non-simple words to be simple 
(or declares the next portion of red cells), 
then we get the same game $G_{n,c}$ as in the previous theorem.
\end{proof}

\begin{theorem}\label{th5}
There exists an optimal programming language $U$ such that for all positive integers $c$ there are infinitely many $n$ with
$$
\CT(H^U_n|N^U_{n+c})\ge (n-7)/3.
$$
\end{theorem}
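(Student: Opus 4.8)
The plan is to follow the architecture of the proof of Theorem~\ref{th2} as closely as possible, the essential new difficulty being that now a \emph{single} language $U$ must simultaneously control the target object $H^U_n$ and the conditioning object $N^U_{n+c}$. As before, I would fix an optimal $W$, set $U(0^e p,x)=W(p,x)$ to guarantee optimality, and keep the freedom to define $U$ on programs not beginning with $e$ zeros. I would run in parallel the enumerations of the graph of $W$, of the total-complexity language $T$, and of a set $S$ of pairs $(n,n+c)$ with pairwise $d$-separated coordinates, and for each $(n,n+c)\in S$ play a game $G_{n,c}$. The board is $\N\times\N$: the horizontal coordinate records the current value of $N^U_{n+c}$ (a number below $2^{n+c}$, so $r=2^{n+c}$ columns), and the vertical coordinate records the lexicographic rank of the current $H^U_n$ among strings of length $n$ (a value below $2^n$). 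Whenever a program $p$ of length less than $s=(n-7)/3$ is found to halt on every length-$(n+c)$ input, I would colour red each cell $(i,\mathrm{rank}(T(p,i)))$; there are fewer than $p=2^{s}$ such declarations, so Lemma~\ref{l1} applies with these $r,p$. Guaranteeing that the final pair $(N^U_{n+c},H^U_n)$ lands on a white cell then yields $\CT(H^U_n|N^U_{n+c})\ge s$.

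The moves available to me, playing as Bob, come from giving fresh short programs. A \emph{right} move, increasing $N^U_{n+c}$, is realised by giving a high-complexity string whose length differs from every $n$ occurring in $S$ a program of length $n+c-1$; this does not disturb any $H^U_{n}$, and about $2^{n+c-1}(1-2^{-e})$ such moves are available, so $\kh\approx2^{n+c-1}$. An \emph{up} move, advancing $H^U_n$, is realised by giving the current $H^U_n$ (length $n$, complexity $\ge n$) a program of length $n-1$, making it simple; about $2^{n-1}(1-2^{-e})$ of these are available, so $\kv\approx2^{n-1}$, and $\kh/f\approx\kv$ for $f=2^c$, exactly the balance exploited in Theorem~\ref{th2}. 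The crucial coupling is that this length-$n$ string, once simple, has complexity below $n+c$ and therefore also enters the count $N^U_{n+c}$: in the worst case every up move is a \emph{diagonal} move, up and to the right at once. I would accommodate this either by passing to the diagonal game analysed in the Appendix, or, more elementarily, by first spending one length-$(n+c-1)$ program to place the current $H^U_n$ into the $N^U_{n+c}$-count while it is still non-simple (a pure right move) and only then making it simple (now a genuinely pure up move); the extra $\le2^{n-1}$ right moves this consumes are negligible against $\kh\approx2^{n+c-1}$.

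On the adversary's side, $W$ can make fewer than $2^{n-e}$ strings of length $n$ simple (these are Alice's up moves $\mv$) and can bring fewer than $2^{n+c-e}$ further strings into the count $N^U_{n+c}$, while the carry-over increments of $N^U_{n+c}$ produced by the other games contribute at most $\sum_{i<n+c-d}2^i<2^{n+c-d}$ (Alice's right moves $\mh$); once $d,e$ are large all of these are negligible, so $\mv\ll\kv$ and $\mh/f\ll\kh/f$. With $r=2^{n+c}$, $p=2^{(n-7)/3}$ and $f=2^c$ one computes $\sqrt{rp^3/f}=2^{n-3.5}$, so the right-hand sides in Lemma~\ref{l1} are $2^{n-3.5}$ and $2^{n-2.5}$, both comfortably below $\kv\approx\kh/f\approx2^{n-1}$; hence Bob wins every $G_{n,c}$ and the theorem follows. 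The hard part will be precisely the coupling forced by the single language: unlike in Theorem~\ref{th2}, the two coordinates cannot be driven independently, and one must also control the fact that an adversary move making some \emph{later} length-$n$ string simple can cause $H^U_n$ to jump several ranks at once when Bob subsequently fills the leading block. Showing that the total size of these jumps is bounded by the number of adversary-made-simple length-$n$ strings, hence at most $2^{n-e}$ and absorbable into the $\mv f$ term of Lemma~\ref{l1}, and that the diagonal or pre-counting device keeps all budgets within the stated inequalities, is the delicate step for which the extra slack $(n-7)/3$ rather than $(n-5)/3$ is there to provide.
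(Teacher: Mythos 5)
Your proposal is correct, and it resolves the one genuinely new difficulty of this theorem --- the coupling of $H^U_n$ and $N^U_{n+c}$ through the single language $U$ --- by a different device than the paper. The paper keeps the forced diagonal moves inside the game: it states and proves a second game lemma (Lemma~\ref{l2}), in which Bob is allowed horizontal and diagonal shifts, and re-runs the staircase analysis with step length $f\ge 2$ (this is where the extra factors of $2$, the choice $f=2^{c+1}$, and ultimately the exponent $(n-7)/3$ come from). Your fallback option, ``passing to the diagonal game analysed in the Appendix,'' is a misattribution: the Appendix analyses the up/right $(r,a,b)$-game with $p=1$ (Theorem~\ref{th7}), not a diagonal-move game, so that branch of your plan would in fact require you to state and prove the paper's Lemma~\ref{l2} yourself. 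Your primary device, however, is self-contained and avoids any new game lemma: pre-registering the current $H^U_n$ in the $N^U_{n+c}$-count by a program of length $n+c-1$ is a pure right move precisely because $c\ge 1$ gives $n+c-1\ge n$, so the word stays non-simple; the subsequent length-$(n-1)$ program is then a genuinely pure up move, and everything reduces to Lemma~\ref{l1} as stated (the right-then-up decomposition is legal in that game, since Bob may make several shifts per turn and only the limit cell is judged). Your treatment of the rank jumps of $H^U_n$ caused by the adversary simplifying later length-$n$ words --- charging the excess to Alice's $\mv<2^{n-e}$ up moves --- is the same reordering/accounting already used in the paper's proof of Theorem~\ref{th3}.

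Two caveats, neither fatal. First, the pre-counting trick degenerates exactly at $c=0$, where a pre-count program of length $n+c-1=n-1$ would itself simplify the word; this is why the paper's Theorem~\ref{th4} must invoke Lemma~\ref{l2} for its special case $c=0$, whereas here the hypothesis ``positive integers $c$'' saves you. Second, your claim that the $\le 2^{n-1}$ reserved pre-count right moves are ``negligible against $\kh\approx 2^{n+c-1}$'' is false for $c=1$, where they consume half the budget; the conclusion nevertheless survives, since after the reservation one still has $\kh/f\ge 2^{n-2}(1-\eps)$, comfortably above $\sqrt{rp^3/f}=2^{n-3.5}$ --- which is exactly the slack the weakened bound $(n-7)/3$ provides. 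In sum: the paper's route via Lemma~\ref{l2} is uniform in $c$ and leaves the move budgets untouched; your route is more elementary, reusing Lemma~\ref{l1} as a black box, at the price of being tied to $c\ge 1$ and of a budget reservation that must be (and is) absorbed by the parameter slack.
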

\begin{proof}
The proof is similar to that of
the previous theorems. 
But there is also an important difference. 
We cannot increment the numbers $H^U_n$ and $N^U_{n+c}$ independently because they depend on the same programming language. 
Each increase of $H^U_n$ produces an increase of $N^U_i$ for all $i\ge n$.
If $i$ is a second component of a pair from $S$ different from the pair $(n,n+c)$,
then the number of such increases is negligible compared to $2^{i}$ due to property (1) of $S$,
and can be ignored. 
However   increases of  $N^U_{n+c}$ caused by increasing $H^U_n$ cannot be ignored. 
On the other hand, $N^U_{n+c}$ can be increased by simplifying words whose lengths do not lie in $S$. 
This will make it possible to make such increases without changing $H^U_n$ and, in general, 
all numbers of the form $H^U_i$. 
Thus, we need to build a winning strategy in a game where we are allowed horizontal $((i,j)\to (i+1,j)$) and diagonal 
$((i,j)\to (i+1,j+1)$) 
shifts. 
Alice has the same restriction, 
but it will be easier to assume that Alice is allowed both vertical and horizontal shifts 
(a diagonal shift can be implemented by two consecutive shifts, vertical and horizontal). 
A computable winning strategy is delivered by the following lemma.
 
\begin{lemma}\label{l2}
Consider the game, where Bob is allowed to make horizontal and diagonal shifts 
(no more than $\kh$ and $\kd$, respectively), 
Alice is allowed vertical and horizontal shifts (no more than $\mv $ and $\mh$), 
and  Alice can declare $p$ times  any 
at most $r$ cells red.
In this game Bob has a computable winning strategy if
$$
\kh/f-\mv\ge \sqrt{rp^3/f}
,\quad
\kd-2\mv-2\mh/f\ge 4\sqrt{rp^3/f}
$$
for some integer  $f>1$. 
\end{lemma}
Recall that $r=2^{n+c}$ and $p=2^{(n-7)/3}$. Let us apply Lemma~\ref{l2} to $f=2^{c+1}$. 
The inequalities of the lemma become
$$
\kh2^{-c-1}-\mv \ge 2^{n-3.5},\quad
\kd-2\mv-2^{-c}\mh\ge 2^{n-1.5}.
$$
By choosing sufficiently large $d,e$ we can make the left-hand sides be arbitrarily close to $2^{n-2}$ and $2^{n-1}$, respectively, 
and therefore larger than their right-hand sides. It remains to prove Lemma~\ref{l2}.
\end{proof}
 
\begin{proof}[Proof of Lemma~\ref{l2}]
The proof is similar to that of Lemma~\ref{l1}.
More precisely, let the token 
after the next announcement of Alice be on the cell $(x,y)$. 
Again consider sets of the form 
$$
A_j=\{x+i,y+\lfloor i/f \rfloor+j)\mid i\ge0\},
$$
where $j=0,1,\dots,\Delta-1$, consisting of horizontal steps of length $f\ge 2$. 
Among them, we choose the set $A_j$ with the smallest number of red cells. These sets are pairwise
disjoint, so the chosen set contains at most $rp/\Delta$ red cells.

Then we make as many diagonal shifts as needed to place the token into the selected set. 
Because the step length is at least 2, 
this is indeed possible, and it suffices to make less than $2\Delta$ shifts. 
Indeed, at the beginning we are in the set $A_0$ ($i=0$). 
With one diagonal shift, we get into $A_1$ ($i=1$). Making one more diagonal shift, we get into $A_2$ ($i=2$). 
After the $f$th diagonal shift, we will remain in the same set, but the next shift will move the token to the next set.

Then we keep the token in the selected set $A_j$ 
and move it away from red cells. 

\emph{Returning the token back to the selected stair case:}
If Alice moves the token up, on the next our move we make at most $f$ right shifts to return the token to the chosen set. 
If Alice moves the token to the right and before the move the token  was at the end of a step (and therefore has left $A_j$), 
we make  the diagonal shift, returning the token to $A_j$. 

\emph{Avoiding red cells:}
If the token is on a red cell, then we shift it to the right, and if after that the token left the selected stair case, 
then we make a diagonal shift. If after these shifts  the token is still on a red cell, then we repeat the moves until it is on a white cell.

The strategy has been described. 
It remains to prove that it does not violate the rules, that is, we do not exceed the limit of shifts. 
We make four kinds of shifts: 
\begin{itemize}
\item Shifts to the right  in order to move the token away from a red cell. The number of 
such shifts is less than the number of red cells in the selected set, 
multiplied by the number of red cells declarations, that is, less than $rp^2/\Delta$. 
\item Shifts to the right to place the token in the selected stair case 
after Alice shifts it up. 
There are less than $f \mv$ such shifts. 
\item Diagonal shifts to place the token in the selected stair case after Alice's announcements of red cells. 
There are less than $2\Delta\cdot p$ of such shifts. 
\item Diagonal shifts to place the token in the selected stair case after it is moved to the right (by Alice or us) and leaves the selected set. 
Denote by $\kh'$ the actual number of horizontal shifts that we have made. 
Then the total number of horizontal shifts is at most $\kh'+\mh$, but only each $f$th of them takes a token out of the stair case. 
Taking into account that by moving the token diagonally, we ourselves additionally increase the horizontal coordinate, 
we get that the number of such diagonal shifts is less than
$$
(\kh'+\mh)/f+ (\kh'+\mh)/f^2+\dots\le 2 (\kh'+\mh)/f.
$$
\end{itemize}
So, in order not to exceed the limits, we need inequalities
\begin{align*}
\kh&\ge rp^2/\Delta +   f \mv\\
\kd&\ge2(rp^2/\Delta +   f \mv +\mh)/f +2\cdot\Delta p.
\end{align*}
These are the same inequalities as in Lemma~\ref{l1}, except that the factors 2 appeared on the right side of the second inequality. 
Putting $\Delta=\sqrt{rp/f}$, we get exactly the inequalities from the condition of the lemma.
\end{proof}

\begin{theorem}\label{th4}
There is an optimal programming language $U$ such that for all positive integers $c$ there are infinitely many $n$ with
$$\CT(N^U_n|H^U_{n+c})\ge (n-7)/3.$$
\end{theorem}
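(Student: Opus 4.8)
The plan is to reuse verbatim the machinery of Theorems~\ref{th2} and~\ref{th5}. Fix an optimal $W$, define the single optimal language $U$ to agree with $W$ on all programs that begin with $e$ leading zeros (for a large constant $e$), and keep the freedom to assign the remaining programs; fix an enumerable family $S$ of pairs with properties~(1) and~(2), and run in parallel, over all $(n,n+c)\in S$, a game $G_{n,c}$ against an imaginary adversary. In $G_{n,c}$ the token sits at the cell $(H^U_{n+c},N^U_n)$: the horizontal coordinate is the conditioning object $H^U_{n+c}$ and the vertical one is the target $N^U_n$. Whenever a program $p$ of length $<s=(n-7)/3$ is discovered for which $x\mapsto T(p,x)$ is defined on all strings of length $n+c$, we colour red every cell $(x,T(p,x))$ with $|x|=n+c$; this is at most $r=2^{n+c}$ cells per declaration and there are fewer than $p=2^{s}$ declarations. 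Keeping the final token white in every $G_{n,c}$ says exactly that $T(q,H^U_{n+c})\ne N^U_n$ for each total $q$ of length $<s$, i.e.\ $\CT(N^U_n\mid H^U_{n+c})\ge s$.

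The essential new point, which is the mirror image of the coupling exploited in Theorem~\ref{th5}, is how to move the two coordinates independently. To advance the horizontal coordinate $H^U_{n+c}$ I simplify the current lex first non-simple string of length $n+c$, but I give it a fresh description of length in the window $[n,n+c-1]$ (say exactly $n+c-1$, not beginning with $e$ zeros). Since $c\ge 1$ this window is nonempty, so the new complexity is $\ge n$; hence the simplified string is \emph{not} counted by $N^U_n$ and the vertical coordinate is untouched. This is a pure horizontal move, and there are about $\kh\approx 2^{n+c-1}(1-2^{-e})$ of them. To raise the vertical coordinate $N^U_n$ I instead simplify a fresh string of currently large complexity whose length is not a component of any pair of $S$, giving it complexity $n-1<n$; this alters no $H^U_i$ whatsoever, so it is a pure vertical move, and there are about $\kv\approx 2^{n-1}(1-2^{-e})$ of them.

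With both moves pure, the adversary's ``unwanted'' increments are negligible exactly as before: the increases of $N^U_n$ forced by the enumeration of $W$ and by the increments $N^U_i$ of other games (whose first components $i<n$ satisfy $i<n-d$ by property~(1)) contribute $\mv<2^{n-d}+2^{n-e}$, while $H^U_{n+c}$ can be advanced only by $W$, since $n+c$ is isolated from all other components of $S$, contributing $\mh<2^{n+c-e}$. Thus $G_{n,c}$ is precisely a game of the type of Lemma~\ref{l1}, with $\kv\approx 2^{n-1}$, $\kh\approx 2^{n+c-1}$, $r=2^{n+c}$, $p=2^{(n-7)/3}$ and $\mv,\mh$ negligible. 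Taking $f=2^{c}$ gives $\kh/f\approx 2^{n-1}$ and, for $s=(n-7)/3$, $\sqrt{rp^3/f}=2^{\,n-3.5}$, so both inequalities of Lemma~\ref{l1} hold with room to spare once $d,e$ are large; this delivers a computable winning strategy in every $G_{n,c}$ and hence the theorem. (Equivalently, if one prefers to simplify only strings of length $n+c$, assigning some of them complexity $<n$ yields a diagonal move alongside the horizontal one, and Lemma~\ref{l2} applies just as in Theorem~\ref{th5}.)

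The main obstacle, and the only place where this direction genuinely differs from Theorem~\ref{th5}, is the verification that advancing the conditioning object $H^U_{n+c}$ can be done without dragging the target $N^U_n$ upward. In Theorem~\ref{th5} the analogous coupling was unavoidable and forced a diagonal move; here it is avoidable precisely because of the complexity window $[n,n+c-1]$, which is nonempty only because the two indices differ (by $c\ge 1$). What remains is the routine bookkeeping already present in the earlier proofs: checking that the pools of descriptions of lengths $n-1$ and $n+c-1$ reserved for $G_{n,c}$ do not collide with those of other games, which is guaranteed by the $\ge d$ separation of the components of $S$, and confirming that the numerical regime feeds the game lemma correctly.
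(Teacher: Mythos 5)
Your proposal is correct and is essentially the paper's own proof: the paper likewise observes that for $c>0$ one can raise $H^U_{n+c}$ without touching $N^U_n$ (precisely because a description of length $n+c-1\ge n$ is available), so the game reduces to Lemma~\ref{l1} with pure vertical and horizontal moves, with the same bookkeeping of unwanted increments via properties (1) of $S$ and the constant $e$. The only point the paper adds beyond your argument is the special case $c=0$, where the coupling forces diagonal moves and Lemma~\ref{l2} is invoked instead (as in Theorem~\ref{th5}) --- exactly the variant you note parenthetically --- but this case is not required by the statement as given.
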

\begin{proof}
We argue as before, that is, we choose a set $S$ with the same properties and ensure that the inequality 
in the condition holds for all pairs $(n,n+c)\in S$. Now
we can increase the numbers $N^U_n$ without changing the numbers $H^U_i$. 
For $c>0$ we can increase the number $H^U_{n+c}$ without changing the number  $N^U_{n}$. 
And the case $c=0$ is special --- by increasing the number $H^U_{n+c}$, we automatically increase the number $N^U_{n}$ as well. 
Therefore, in the game for $c=0$ we can make vertical and diagonal moves, 
and for the remaining $c$'s we can make vertical and horizontal moves. 
Therefore, for $c=0$ we need Lemma~\ref{l2}, and for the remaining $c$'s we need Lemma~\ref{l1}. 
The rest of the proof is similar.
\end{proof}

\section{Open problems}

1. Prove that  for any object  $X=\tilde L,L,G,\tilde N,T$ there are optimal programming languages $U,V$ such that
 $\CT(X^U_n|X^V_{n+c})> d$ for all $c,d$ and  for infinitely many $n$.
 
 2.  Prove that  for all pairs   $(X,Y)$ where 
\begin{itemize}
\item $X=B,N,L,G,H,T$,\quad $Y=\tilde L$,
\item $X=B,\tilde L,G,\tilde N,T$, \quad$Y=L$,
\item $X=B,\tilde L,L,\tilde N, N,T$, \quad$Y=G$,
\item $X$ is any object different from $Y$ and  $Y=\tilde N,N,H,T$,
\end{itemize}
 there is an optimal programming language $U$ such that
 $\CT(X^U_n|Y^U_{n+c})> d$ for all $c,d$ and  for infinitely many $n$.

3. Prove that the statement of the previous item holds for \emph{every} optimal programming language $U$. 

4. Provide a complete analysis for the games of Lemmas~1 and~2.

\section{Acknowledgments.}
The author is sincerely grateful to Dmitri Polushkin for the permission to publish his result (Theorem~\ref{th7})
and to all participants of Kolmogorov seminar at MSU.

\appendix

\section{A complete analysis of a similar game}\label{s4}

The proofs of our theorems were based on an analysis of two similar games between Alice and Bob. 
To prove the theorem, it was enough to construct a computable winning strategy for Bob. 
At the same time, we do not know exactly, 
for which values of the parameters Alice wins, and for which Bob wins. 
One of the reasons for this is the abundance of parameters 
(there are six of them). 
In this section, we give a complete analysis of a game of the first type 
(both players can move the token up and to the right), 
in which the total number of shifts of each player is limited, 
and  $p=1$ (Alice can only make one announcement about red cells). 
This game has three parameters $r,a,b$.

So, there are two players in the \emph{$(r,a,b)$-game}, Alice and Bob. 
Alice goes first and must choose 
at most  $r$-element subset $R$ of the set $\N\times\N$ whose cells we will call \emph{red}. 
In the course of the play a token moves from cell to cell, at the beginning the token is on the $(0,0)$ cell. 
If the token is on a red cell, then Bob moves the token to the right or up, that is, 
from the current cell $(i,j)$ to one of the cells $(i+1,j)$, $(i,j +1)$. 
If the token is on a white cell, then Alice moves it to the right or up. 
Alice can move the token at most $a$ times and Bob at most $b$ times. 
The game continues until one of the players cannot make the next move without exceeding the limit of moves --- 
this player is declared the loser. 
We may assume that the playing field consists of all cells of the triangle (see Fig.~\ref{pic1})
$$
T_{a+b}=\{(i,j)\mid i\ge 0, j\ge 0, i+j\le a+b\}.
$$ 
\begin{figure}[ht]
\begin{center}
\includegraphics{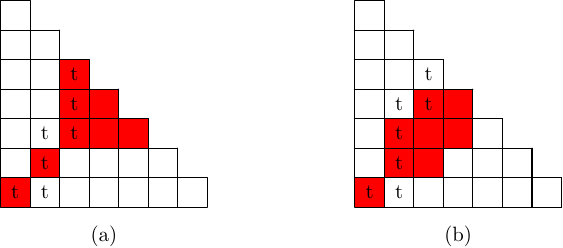}
\end{center}
\caption{The game board in the $(r=8,a=2,b=4)$-game. 
Cells from Alice's first move are marked in red. 
The letters ``t'' mark the path of the token in a  possible play won by Alice (a) 
and Bob (b). }\label{pic1}
\end{figure}
We will call the remaining cells \emph{unreachable in the $(r,a,b)$-game}. 
In addition, we will call a cell $(i,j)$ \emph{reachable from a cell $(k,l)$ in the $(r,a,b)$-game} 
if $(i,j)\in T_ {a+b}$ and $i\ge k,j\ge l$. 
The set of such cells will be denoted by $T_{a+b}(k,l)$ and called \emph{the triangle with vertex $(k,l)$ 
in the $(r,a,b)$-game}.

The following theorem provides a complete characterization of the values of the parameters 
for which Alice has a winning strategy and for which it is Bob. 
\begin{theorem}[D. Polushkin~\cite{p}]\label{th7}
If $b \ge r$ or 
\begin{equation}
b\ge a \text{ and } \frac{(b - a + 1)(b - a + 2)}2>r-a,\label{eq-ind}
\end{equation} 
then Bob has a winning strategy. 
For all other $a,b,r$, Alice has a winning strategy. 
\end{theorem}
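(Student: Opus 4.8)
The plan is to reduce the theorem to a computation of a single quantity and then attack the two inequalities separately. First I would reformulate the winning condition. Since the token travels a monotone lattice path, each red cell it meets is a Bob-move and each white cell an Alice-move, so Bob wins exactly when he can steer (at red cells) so that the path meets at most $b$ red cells: then Bob is never stuck and, the game being finite, Alice is forced to run out of moves. In particular, if $b\ge r$ the whole board carries at most $r\le b$ red cells, the path meets at most $b$ of them, and Bob wins trivially; this already gives one half of the first clause. It also shows that in every sub-triangle $T_{a+b}(k,l)$ Alice can win only if it contains more red cells than Bob's remaining budget. I would then phrase everything through $g(a,b)$, the least number of red cells with which Alice can win from a corner with budgets $(a,b)$; the theorem is equivalent to $g(a,b)=b+1$ when $b<a$ and $g(a,b)=a+\binom{b-a+2}{2}$ when $b\ge a$, the baseline $g(a,b)\ge b+1$ coming from the $b\ge r$ remark.

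For the upper bounds I would exhibit explicit red sets. When $b\le a$, color the diagonal $(0,0),(1,1),\dots,(b,b)$: whenever the token sits on $(k,k)$ Bob must step off onto a white cell $(k{+}1,k)$ or $(k,k{+}1)$, and Alice (who has $a\ge b$ moves) answers by pushing to $(k{+}1,k{+}1)$; arriving at $(b,b)$ Bob has already spent all $b$ moves and is stuck, so $b+1$ cells suffice. When $b\ge a$ I would combine this with a terminal trap: color the short diagonal $(0,0),\dots,(a{-}1,a{-}1)$ together with the solid triangle $\{(a{+}i,a{+}j):i,j\ge0,\ i{+}j\le b{-}a\}$ based at $(a,a)$. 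The diagonal forces $a$ red moves and consumes Alice's $a$ moves funneling the token to $(a,a)$, where Bob still has budget $m:=b-a$; inside the solid triangle every move from a level below $m$ stays red, so Bob needs $m+1$ moves to escape but has only $m$ and is trapped. This uses $a+\binom{m+2}{2}$ cells, and a one-line optimization of $d+\binom{b-d+2}{2}$ over $d\le a$ confirms that this choice of "$d=a$ diagonal steps" is the cheapest, matching the threshold.

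The hard direction is the matching lower bound: when $b\ge a$ and $r<a+\binom{b-a+2}{2}$ (equivalently $r\le g(a,b)-1$), Bob wins against every placement. I would prove $g(a,b)\ge a+\binom{b-a+2}{2}$ by induction on $a+b$, but a naive induction on a single corner breaks down precisely at a red corner: there Bob is forced to one of the two diagonal neighbors, Alice must win from both, and the two reachable sub-triangles overlap in $T_{a+b}(k{+}1,l{+}1)$, so bounding each sub-game separately double-counts the shared red cells and loses the bound as soon as $m\ge1$. The fix I would pursue is to strengthen the statement to a simultaneous multi-corner game: let $g_k(\alpha,\beta)$ be the least number of red cells with which Alice wins starting from each of $k$ consecutive cells on one anti-diagonal. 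A forced Bob-move turns $k$ corners into an interval of $k+1$ next-level corners (the intervals widen by one at each red level), while each of Alice's white moves can funnel the interval back by one at the cost of one of her moves; this is exactly the bookkeeping behind the recursions $g_1(\alpha,\beta)\le 1+g_1(\alpha{-}1,\beta{-}1)$ and $g_1(0,m)=\binom{m+2}{2}$, whose telescoping already reproduces $\alpha+\binom{m+2}{2}$.

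The main obstacle, then, is establishing the reverse inequalities for $g_k$: showing that the widening of the required interval is genuinely forced and cannot be undercut by a clever, non-triangular placement, and that Alice cannot shrink the interval faster than one step per white move. Concretely I expect to prove by induction on $\beta$ that $g_k(\alpha,\beta)$ obeys a recursion whose solution, read off at $k=1$, is exactly $a+\binom{b-a+2}{2}$ for $b\ge a$ and $b+1$ otherwise; the delicate point is tracking how Alice's budget $a$ interacts with the interval width so that the "$+a$" term and the triangular term separate cleanly. Once the lower bound on $g_k$ is in hand, reading it at $k=1$ closes the gap with the constructions above and yields both clauses of the theorem.
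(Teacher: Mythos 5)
Your reformulation (Bob wins iff he can keep the monotone path down to at most $b$ red cells), the trivial $b\ge r$ case, and your explicit red sets for Alice are all correct; in fact your ``kite'' --- a diagonal of length $a$ feeding into the solid triangle of side $b-a$ based at $(a,a)$ --- is exactly what the paper's inductive construction (base case: color all of $T_b$; step: prepend a red cell at the origin) unrolls to, and your diagonal strategy settles the case $b<a$ completely. So everything up to and including the upper bound $g(a,b)\le a+\binom{b-a+2}{2}$ is sound.

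But the hard half of the theorem --- that Bob wins whenever $b\ge a$ and $b<r<a+\binom{b-a+2}{2}$ --- is not proved in your proposal; it is deferred to a recursion for a multi-corner quantity $g_k$ that you yourself flag as ``the main obstacle'' and ``expect to prove.'' The gap is real, not cosmetic. Your interval picture assumes that the set of positions Bob can be forced through propagates as a widening interval of $k+1$ cells on successive anti-diagonals and that Alice's white moves ``funnel'' it back by one; but against a general (non-triangular) red set the reachable red frontier need not be an interval, Alice's strategy may branch differently along different Bob-branches of the game tree, and the semantics of ``Alice wins from each of $k$ corners with a single red set'' is never pinned down --- which is precisely why, as you correctly diagnose, the single-corner induction double-counts shared red cells at a red corner. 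Nothing in the proposal replaces that missing argument. The paper closes exactly this gap by a different device: a shift lemma stating that the winner of the $(r,a,b)$-game also wins the $(r+1,a+1,b+1)$-game (so only the base case $a=0$ needs direct analysis, where Alice must blanket all of $T_b$). Bob's half of that lemma is proved by simulating a mental $(r,a,b)$-game whose first move $R'$ is obtained from Alice's real move $R$ either by deleting the farthest cells (when $R$ contains no translated triangle $T_{a+b+2}(i,j)$) or by adjoining the ``pink'' cells adjacent to such triangles, with a counting argument --- disjoint diagonals from pink cells injecting into unreachable cells of $R$ --- showing $|R'|<|R|$ so the simulation is legal. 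Some analogue of this cardinality-decreasing surgery on arbitrary red sets (or a fully specified and verified $g_k$ recursion) is what your write-up still needs before it constitutes a proof.
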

\begin{proof}
Assume first
that 
$b\ge r$. 
Then Bob always moves the token up. 
There are at most $r$ red cells, and each of them can be visited only once, so Bob's strategy is winning. 

Assume that  $r>b$. If $a>b$, then  Alice wins as follows. 
Her first move is to color red all the cells 
$\{(i,i)\mid 0\le i<r\}$ on the diagonal and then she plays as follows. 
As soon as Bob moves a token off the diagonal, 
Alice returns it back on her next move. Since $a> b$, this is indeed possible. 
Since Bob has less than $r$ moves, Alice wins.

It remains to consider the case  $a\le b<r$. 
We need to prove that if inequality~\eqref{eq-ind} is true, 
then Bob has a winning strategy, and otherwise Alice. 
We will construct a winning strategy for one of the players by induction on $a$.

\emph{Basis of induction.} 
For $a=0$ Alice can only make the first move by choosing $r$ red cells, 
and Bob can move the token to any cell of the triangle $T_b$, to which there is a path through red cells. 
Therefore, Alice wins if and only if she makes all cells of this triangle red on her first move, 
which is possible iff $r$ is not less than the number of cells in the triangle. 
This means that for 
$$
r\ge \frac{(b+1)(b+2)}2
$$
Alice has a winning strategy, otherwise Bob.

\emph{Induction step.} 
We need the following 
\begin{lemma} 
If Alice has a winning strategy in the $(r,a,b)$-game, then she has it in  the $(r+1,a+ 1,b+1)$-game. 
Conversely, if Bob has a winning strategy in the $(r,a,b)$-game, 
then he also has it in the $(r+1,a+1,b+1)$-game. 
\end{lemma}
Note that the induction step easily follows  from the lemma. 
Indeed, when all parameters are increased by 1, 
the values $r-a$ and $b-a$ do not change, so the truth/falsity of the inequality~\eqref{eq-ind} 
is preserved.

\begin{proof}
The first assertion is easy. 
Indeed, assume that Alice has a winning strategy $S$ in the $(r,a,b)$-game and let  $R$
denote her first move. 
Assume that in the $(r,a,b)$-game, cells are numbered by positive integers, 
and in the $(r+1,a+1,b+1)$-game by all non-negative integers. 
Then Alice's first move in the $(r+1,a+1,b+1)$-game is the set $R\cup\{(0,0)\}$. 
On the next move, Bob will move the token off the $(0,0)$ cell, after which Alice will move the token to the $(1,1)$ cell and launch the $S$ strategy.

Assume now that 
Bob has a winning strategy $S$ in the $(r,a,b)$-game. 
We want to construct his winning strategy in the $(r+1,a+1,b+1)$-game. 
Let us start playing the $(r+1,a+1,b+1)$-game for Bob in parallel with the mental $(r,a,b)$-game, 
making the same moves in both games, except for the very first one. 
Denote by $R$ the first move made by Alice in the real game. 
The set $R$ consists of at most $r+1$ red cells. In the mental game we will make another first move $R'$. 
The set $R'$ depends on whether the set $R$ includes at least one triangle or not.

\emph{Case 1 (no-triangle-case): the set $R$ does not include any triangle of the form $T_{a+b+2}(i,j)$. }
This means that $R$ does not contain any cell with the sum of coordinates $a+b+2$. 
Then we remove from $ R$ all the cells located at the greatest distance from the origin.
The resulting set is $R'$ (see Fig.~\ref{pic2}).
\begin{figure}
\begin{center}
\includegraphics{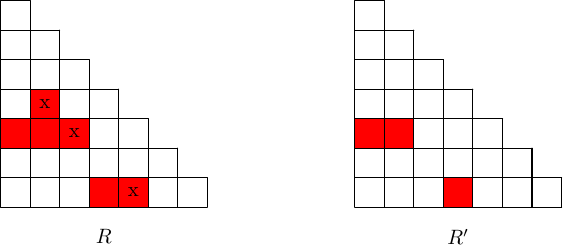}
\end{center}
\caption{ On the left, the set $R$ is shown, which does not include any triangle. 
On the right is the set $R'$ obtained from $R$ by removing the farthest cells, they are marked with the letter ``x''.}\label{pic2}
\end{figure}

\emph{Case 2 (triangle-case): the set $R$ includes at least one triangle}. 
Let us include in $R'$ all reachable (in the $r,a,b$-game) cells from $R$ 
and all reachable cells outside $R$ 
from which one can get in one move  inside some triangle $T_{a+b +2}(i,j)$ included in $R$. We will call the added cells \emph{pink} 
(see Fig.~\ref{pic3}).
\begin{figure}[ht]
\begin{center}
\includegraphics{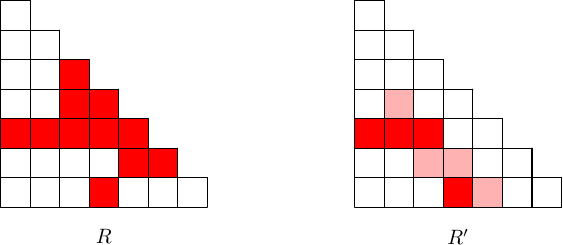}
\end{center}
\caption{On the left, a set $R$ including triangles is shown. 
On the right is the set $R'$ which includes all reachable cells from $R$, as well as reachable cells outside $R$, 
from which one can get  in one move  to some triangle $T_{a+b+2}(i, j)$ inside $R$ (drawn in pink).}\label{pic3}
\end{figure}
Later we will prove that $|R'|<|R|$.

Let us play $R'$ for Alice in the mental game, and then apply the $S$ strategy in the real and mental games. 
Alice's moves from the real game are translated into the mental game, 
and vice versa, Bob's moves from the mental game are translated into the real game. 
Only one of the following three events can stop such a  play: 
\begin{enumerate}
\item In the no-triangle-case, 
the token  hits a removed cell. 
Then Bob moves in the real game, 
and Alice moves in the mental game, 
so $S$ can no longer be used in the real game. 
\item In the triangle-case, 
the token hits a pink cell. 
Then Alice moves in the real game, 
and Bob moves in the mental game, 
so using $ S$ is no longer possible. 
\item The mental game is over (and won by Bob). 
\end{enumerate}
The second event is actually impossible, since in the triangle-case 
all the cells reachable in the mental game from pink cells are red or pink. 
Since $S$ is a winning strategy, it cannot at any time allow the token to land on a pink cell.

As soon as one of the remaining two events occurs, 
we stop the mental game and continue to play only the real game. 
Denote by $t$ the cell on which the token is located at that moment. 
Consider separately the no-triangle and triangle cases.
 
\emph{The no-triangle case. }
If the token ended up on a removed cell, 
then $t\in R$ and it is our turn to move. 
Then we move the token in either direction (we have at least one move left). 
The token will end up on a cell outside $R$, 
and no matter how far Alice moves 
the token, it will not return to $R$, since $t$ is a  farthest cell from $R$.

Otherwise (if the token is not on a removed cell) 
the third event happened and the mental game is won. That is, $t\notin R'$ and Alice has made $a$ moves. 
As $t$ was not removed from $R$, we also have $t\notin R$.
So Alice will make her last $(a+1)$st move by moving the token to some cell $t'$. 

Since Alice has exhausted her moves, 
we have enough moves to reach any cell of the  triangle $T_{a+b+2}(t')$. 
We just need to prove that this triangle is not included in $R$ --- then Bob can move the token to the nearest cell outside $R$. 
This is the case, since  $R$ does not include any triangle at all.

\emph{The triangle case.} 
 In this case, the mental game is won. That is,  the token is on some cell $t$ outside $R'$ and 
 Alice has made $a$ moves.
 Since $R'$ includes all cells from $R$ that can be reached in the mental game, the token is outside $R$, and Alice will 
 make her last $(a+1)$st move by shifting
 the token to some cell $t'$. 

Again Alice has exhausted her moves, hence
we have enough moves to reach any cell of the triangle $T_{a+b+2}(t')$. 
We claim that this triangle is not included in $R$. 
Indeed, if $T_{a+b+2}(t')\subset R$, then, since Alice moved a token from $t$ to $t'$ in one move, 
the cell $t$ is a pink cell in $R'$ , and we know that this is not the case
 (recall that $S$ has won the mental game).

It remains to prove that in the triangle case 
we have $|R'|<|R|$, that is, the number of unreachable cells in $R$ is greater than the number of pink cells. 
To this end, draw the diagonal $\{(i+t,j+t)\mid t\ge1\}$ from each pink cell $(i,j)$ (see Fig.~\ref{pic4}). 
\begin{figure}[ht]
\begin{center}
\includegraphics{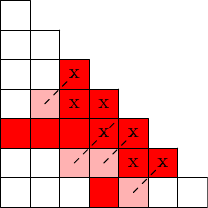}
\end{center}
\caption{Dashed diagonals define the embedding of the set of added cells in the set of deleted cells 
when converting $R$ to $R' $ in the triangle case. }
\label{pic4}
\end{figure} 
Some cell on this diagonal belongs to $R$ and is not reachable. 
Indeed, we know that the cell $(i,j)$ is adjacent to some triangle 
$T\subset R$, that is, one of the two cells $(i+1,j)$ or $(i,j+1) $ belongs to $T$. 
Consequently, all cells of the diagonal $(i+t,j+t)$ with $i+j+2t\le a+b+2$ also belong to $T$. 
Moreover, there are such cells, since the cell $(i,j)$ is reachable in the real game, 
so $i+j\le a+b$, which means $i+j+2\le a+b+2$. 
Depending on the parity of $i+j$, the sum of the coordinates of one of the cells 
of this diagonal is equal to $a+b+1$ or $a+b+2$, so it is not reachable and belongs to $T\subset R$. 
For different added cells $(i,j)$, the diagonals do not intersect. Indeed, otherwise one of the two cells would lie on the diagonal of the other, 
and hence would belong to $T$ and hence to $R$.

So, we have proved that there are at least as many pink cells as unreachable cells from $R$. 
It remains to prove that there are strictly more of them. 
Consider the vertex $(i_0,j_0)$ of any maximal triangle $T$ in $R$. 
Some cell of the diagonal $(i_0+t,j_0+t)$ has the sum of coordinates $a+b+1$ or $a+b+2$, 
which means that it is unreachable, 
but it does not lie on the diagonal of any pink cell. 
\end{proof}
The lemma is proved, and so is the theorem.
\end{proof}

\begin{remark}
It is useful to note that the proof of the theorem is constructive in the sense that it gives a description 
of the winning strategy of the player who has it. 
In more detail: Alice's winning strategy for the values of the parameters for which it exists is quite simple. 
For $a\le b$ it was explicitly described in the proof. 
For $a\ge b$, on her first move, she builds a ``kite'', that is, a triangle with a diagonal attached to its top 
(see Fig.~\ref{pic1}(a)). 
Then she spends all her moves returning the token to the tail of the kite when Bob takes it away from there.

Bob's winning strategy for $b\ge r$ is also simple: move the token up (say) when it lands on a red cell. 
His winning strategy for $r>b$ is not that simple. However,
from the proof of the lemma, it is easy to extract a recursive program that implements that strategy.
\end{remark}

\begin{thebibliography}9
\bibitem{lv} M. Li, P. Vit\'anyi. An Introduction to Kolmogorov
Complexity and its Applications. Springer Verlag, 1997.
\bibitem{SUV}
	A. Shen, V.A. Uspensky, N. Vereshchagin,
	{\em Kolmogorov complexity and algorithmic randomness}.
	AMS Mathematical Surveys and Monographs, Volume: 220; 2017; 511 pp

\bibitem{shen1}	Shen, A.Kh.: Discussion on Kolmogorov complexity and statistical analysis. Comput. J. 42(4), 340–342
(1999)

\bibitem{shen2} Shen, A.: Game Arguments in computability theory and algorithmic information theory, Proceedings
of CiE, pp. 655–666 (2012)

\bibitem{ver}
Vereshchagin, N. Algorithmic Minimal Sufficient Statistics: a New Approach. Theory Comput Syst 58, 463–481 (2016). https://doi.org/10.1007/s00224-014-9605-1

\bibitem p
D. Polushkin. Personal communication (2020).	
\end{thebibliography}
\end{document}